\RequirePackage{amsthm}

\let\LaTeXcline\cline 
\documentclass[pdflatex,sn-mathphys-num]{sn-jnl}
\let\cline\LaTeXcline

\usepackage{graphicx}%
\usepackage{multirow}%
\usepackage{amsmath,amssymb,amsfonts}%
\usepackage{amsthm}%
\usepackage{mathrsfs}%
\usepackage[title]{appendix}%
\usepackage{xcolor}%
\usepackage{textcomp}%
\usepackage{manyfoot}%
\usepackage{booktabs}%
\usepackage{algorithmicx}%
\usepackage{algpseudocode}%
\usepackage{listings}%

\usepackage{mathtools} 

\usepackage[ruled,norelsize,linesnumbered,procnumbered]{algorithm2e}

\makeatletter
\AtBeginEnvironment{procedure}{\let\c@algocf\c@procedure}
\makeatother

\newcommand{\be}{\begin{equation}}
\newcommand{\ee}{\end{equation}}

\newcommand{\bald}{\begin{aligned}}
\newcommand{\eald}{\end{aligned}}

\newcommand{\bbm}{\begin{bmatrix}}
\newcommand{\ebm}{\end{bmatrix}}

\newcommand{\bxi}{\boldsymbol{\xi}}

\newcommand{\bx}{\boldsymbol{x}}

\newcommand{\R}{\mathbb{R}}

\newcommand{\bb}{\boldsymbol{b}}

\newcommand{\bz}{\boldsymbol{z}}

\newcommand{\bW}{\boldsymbol{W}}

\newtheorem{rem}{Remark}
\newtheorem{prop}{Proposition}

\newcommand{\mc}[1]{\mathcal{#1}}

\newcommand{\bA}{\mathbf{A}}
\newcommand{\bB}{\mathbf{B}}
\newcommand{\bpf}{\mathbf{p}}
\newcommand{\bPf}{\mathbf{P}}
\newcommand{\bQf}{\mathbf{Q}}
\newcommand{\bRf}{\mathbf{R}}

\newcommand{\bK}{\boldsymbol{K}}

\newcommand{\bzeta}{\boldsymbol{\zeta}}

\newcommand{\balpha}{\boldsymbol{\alpha}}

\newcommand{\bbms}{\begin{bsmallmatrix}}
\newcommand{\ebms}{\end{bsmallmatrix}}

\definecolor{Modcolor}{HTML}{000000}
\definecolor{Updatecolor}{HTML}{000000}

\newcommand{\thinh}[1]{\textcolor{Modcolor}{#1}}

\usepackage{tikz}
\usepackage{color}
\usepackage{multirow}
\usepackage{amssymb}
\usepackage{gensymb}
\usepackage{tabularx}
\usepackage{extarrows}
\usetikzlibrary{fadings}
\usetikzlibrary{patterns}
\usetikzlibrary{shadows.blur}
\usetikzlibrary{shapes}

\usepackage{pgfplots}
\pgfplotsset{width=10cm,compat=1.9}

\definecolor{midnightGreen}{RGB}{11,85,99}
\definecolor{celestialBlue}{RGB}{82, 153, 211}

\usepackage{cases} 

\usepackage{natbib}
\newtheorem{coll}{Corollary}
\hypersetup{
    pdftitle={Positive Invariance of flat systems via ANN}
    }

\theoremstyle{thmstyleone}%
\theoremstyle{thmstyletwo}%

\theoremstyle{thmstylethree}%
\newtheorem{definition}{Definition}%

\begin{document}

\title[ANN-based Invariance for flat systems]{On ANN-enhanced positive invariance for nonlinear flat systems}


\author*[1,2]{\fnm{Huu-Thinh} \sur{Do}}
\equalcont{\small{ This work was carried out during the postdoctoral research of Huu-Thinh Do at Univ. Grenoble Alpes, Laboratoire de Conception et d'Intégration des Systèmes (LCIS), Valence, France.}}

\author[1]{\fnm{Ionela} \sur{Prodan}}



\affil[1]{\orgname{Univ. Grenoble Alpes, Grenoble INP$^\ddag$, LCIS} \orgaddress{, \city{Valence}, \postcode{F-26000}, \country{France.\\$\ddag$Institute of Engineering and Management University Grenoble Alpes}}}

\affil[2]{\orgname{Department of Aerospace Engineering, University of Michigan}, \orgaddress{ \city{Ann Arbor}, \postcode{48109}, \country{USA}}}




\abstract{
The concept of positively invariant (PI) sets has proven effective in the formal verification of stability and safety properties for autonomous systems. However, the characterization of such sets is challenging for nonlinear systems in general, especially in the presence of constraints. In this work, we show that, for a class of feedback linearizable systems, called differentially flat systems, a PI set can be derived by leveraging a neural network approximation of the linearizing mapping. More specifically, for the class of flat systems, there exists a linearizing variable transformation that converts the nonlinear system into linear  controllable dynamics, albeit at the cost of distorting the constraint set. We show that by approximating the distorted set using a rectified linear unit neural network, we can derive a PI set inside the admissible domain through  its set-theoretic description. This offline characterization enables the synthesis of various efficient online control strategies, with different complexities and performances. Numerical simulations are provided to demonstrate the validity of the proposed framework.
}

\keywords{constraint satisfaction, feedback linearization, mixed-integer, neural networks, model predictive control, control Lyapunov function, reference governor.}



\maketitle

\section{Introduction}
\label{sec:Intro}


\thinh{Handling constraints and verifying stability have always been critical issues in both system and control theory \cite{mayne2000constrained,liu2017dynamical,wang2020variational,li2017nonlocal}.}
For example, for robotic applications, this problem has been regarded as keeping the control system inside a so-called {safe set} \citep{long2025safety,liu2017designing,compton2024dynamic} where safety indices are maintained with stability guaranteed. From a control theoretic viewpoint, such a definition coincides with the notion of the \textit{positively invariant} (PI) set \citep{blanchini1999set}, a subset of the operating space within which the trajectory of the system, once it enters, remains for all future time and continuously satisfies the constraints. As a result, the existence of a PI set ensures the feasibility of the stabilization problem.
Typically, the construction of the set is associated with a positive definite Control Lyapunov Function (CLF) decreasing inside the PI set along the system's trajectory. However, computing a function satisfying such a condition is challenging for nonlinear systems in general. For more details, we refer to the systematic surveys \citep{giesl2015review,dawson2023safe}.

In contrast, finding a PI set with a CLF for constrained linear systems are rather well developed in the literature with efficient theoretical and numerical tools (see e.g., \citep{blanchini1999set,boyd1993control} and the references therein). Thus, to design PI sets for nonlinear systems, linearization is a common first step. 
There exists two main types of linearizations, the first approach relies on the first-order Taylor expansion of the dynamics near the equilibrium point. This direction provides a general procedure but is typically challenging to quantify the approximation error, and hence the validity of the PI set with respect to the original nonlinear dynamics. 
The second approach, called feedback linearization, seeks a variable transformation that simplifies the dynamics into linear ordinary differential equations \citep{soydemir2024online,zheng2022extreme}.
Within this framework, the theory of differentially flat systems \citep{fliess1995flatness} emerges as an efficient tool, demonstrating that for a specific class of systems, the linearization mapping constitutes an equivalence transformation, thereby ensuring that the stability properties of the system in the transformed (called the flat output space) and original space are equivalent. 

In this manner, the dynamical properties of the systems are preserved while characterizing the invariance. The main difficulty of this problem, however, comes with the presence of constraints. This arises from the fact that, in the new coordinates, the description of the constraints will be distorted, complicating the stability design in convex optimization-based frameworks \citep{kandler2012differential,gratzer2024two}.


\begin{figure}[thp]
    \centering
    \includegraphics[width=0.675\linewidth]{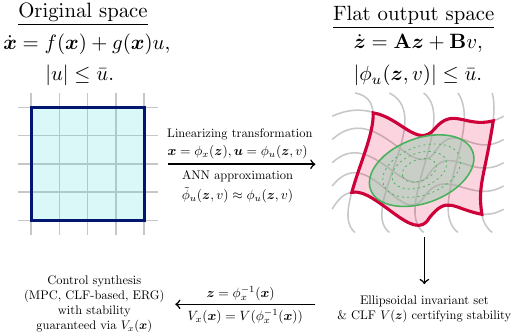}
    \caption{Searching for ellipsoidal PI set in the flat output space for control design in the original space.}
    \label{fig:IntroIllus}
\end{figure}

To sidestep the problem, various approaches have been proposed to obtain an approximation of the convoluted constraints, or the linearizing variable change, transforming the control synthesis into stabilizing a linear systems with manageable constraints. For instance, the studies \citep{hall2023differentially,greeff2020exploiting} proposed to approximate the linearizing mapping with Gaussian processes, embedding invariance certificate into a linear setting with a standard quadratic CLF.
\thinh{However, such an approach offers only probabilistic guarantees, which makes the resulting invariance property inherently non-deterministic.}
From a set-theoretic setting, by introducing a hyperbox subset of the distorted constraints, it was proven in \citep{nguyen2024notes} that, one can construct for the linearized dynamics, under the form of three double integrators, an arbitrarily large ellipsoidal PI set with, again, a quadratic CLF. Similarly, in \citep{bozza2024online}, by linearly approximating the dynamics over a set basis functions, the stability is guaranteed for the feedback linearization scheme in a data-driven fashion, via convex programming.

Aligned with this orientation, our prior works \citep{DoMILP2024,thinhPWAANN} explores the efficacy of Rectified linear unit (ReLU) artificial neural networks (ANN) to address the constrained control of differentially flat (flat) or feedback linearizable systems. Therein, by approximating the linearizing mapping with ReLU-ANN, we showed that the distorted constraints can be taken into account via an online mixed-integer program (MIP).
This is thanks to the piecewise affine nature of the ReLU activation function and its versatility \citep{grimstad2019relu}.
However, with this direction, real-time solvability is not guaranteed as MIP is generally NP-hard. Thus, in this work, with a view to shifting the computational burden offline, we:
\begin{itemize}
\item demonstrate that the approximated constraints can be represented as a union of polytopes, laying the ground work for set-based invariance characterization;
    \item propose an offline procedure to construct a PI set associated with a CLF which is quadratic in the linearizing space (see Fig. \ref{fig:IntroIllus}), \thinh{providing a formal verification for the stability and constraint satisfaction};
    \item synthesize different controllers based on the detected PI set, with different levels of online complexity and performance, from CLF-based control, Model Predictive Control (MPC) to Explicit Reference Governor (ERG). 
\end{itemize}











The remainder of the paper is organized as follows. Section \ref{sec:preliminaries} provides the preliminaries on the context of controlling flat systems in the linearizing coordinate with the distorted constraint, followed by a set-theoretic description of its ANN approximation. With the linear representation in the new space, this description then gives 
rise to the invariance characterization in Section \ref{sec:Invariancecharacterization}. The utility of the PI set detected is then demonstrated via the control synthesis in optimization-based framework, including CLF-based control, MPC, and ERG. Section \ref{sec:IllusExam} presents numerical simulations to validate the proposed setting. Finally, section \ref{sec:concl} concludes and discusses future directions.

\textit{Notation:} Bold lower-case letters denote vector. 
For a set $\mc X$, $\partial\mc X$ denotes its boundary.
Capital bold letters represent matrices with appropriate dimensions. For $\bW$, the notation $\bW_{[j,k]}, \bW_{[i,:]}$ denotes the matrix's entry at row $j$ and column $k$, and its $i$th row, respectively.
For a matrix $\bPf$, $\bPf\succ 0\ (\bPf\succeq 0)$ implies that it is positive (semi)definite. For a vector $\bx$, $\bPf\succ 0$, $\|\bx\|_\bPf = \sqrt{\bx^\top\bPf\bx}$.
The vector $\nabla V(\bx)$ denotes the gradient of the function $V(\bx).$
Let $\mc E(\bPf,\varepsilon)$ denote the ellipsoid $\{\bx:\|\bx\|_\bPf \leq \varepsilon\} $. For a set $\mc B$, Vol$(\mc B)$ denotes its volume.
For a neural network, $\bpf^{[\ell]}, \bW^{[\ell]}, \bb ^{[\ell]}$ denote the output vector, weight and bias of its $\ell$th layer, respectively. \thinh{The time variable is denoted as $t.$}

\section{Preliminaries and problem formulation}
\label{sec:preliminaries}
\subsection{Differentially flat systems}
\label{subsec:flat}
In this work, we consider the class of single-input differentially flat systems described as:
\begin{subnumcases}
{
 \label{eq:flatGeneral}
}
\dot \bx = f(\bx) + g(\bx)u, & \label{eq:flatGeneral_dyna}\\
\text{s.t. }u\in\mc U =\{u\in\R: |u|\leq \bar u\}&\label{eq:flatGeneral_constr} 
\end{subnumcases}
where $\bx\in\R^n$ and $u\in\R$ denote the state and input respectively, and the definition of flat systems is given as follows.

\begin{definition}[Differential flatness \citep{fliess1995flatness}]
    System \eqref{eq:flatGeneral_dyna} is called differentially flat if there exist an output \be y=\phi_y(\bx,u),\ee  such that $\bx$ and $u$ can be algebraically expressed in function of $y$ and a finite number of its derivatives, i.e., 
\be
\bx = \Phi_x\left(y,\dot y, ..., y^{(n-1)}\right),
        u = \Phi_u\left(y,\dot y, ..., y^{(n)}\right).
\label{eq:flatrep}
\ee 
\end{definition}
With the representation \eqref{eq:flatrep}, an important property is that the system \eqref{eq:flatGeneral_dyna} can be exactly linearized by static feedback. Namely, by a diffeomorphic variable change and an input transformation:
\be 
\bx = \phi_x(\bz), \ u=\phi_u(\bz,v), \label{eq:linearizingMaps}
\ee 
the dynamics \eqref{eq:flatGeneral_dyna} become:
\be 
\dot \bz =  \bA \bz +\bB v,
\label{eq:linearizedFlat}
\ee 
where $\bz\triangleq [y, \ \dot y,..., y^{(n-1)}]^\top$ and $\phi_u(\bz,v)$ is found by replacing $\bz$ and $y^{(n)}=v$ into \eqref{eq:flatrep}. The matrices $\bA,\bB$ are from the state-space form of the system $y^{(n)}=v$.

Technically, the chain of integrators described in \eqref{eq:linearizedFlat} is straightforward to stabilize \citep{sontag1990nonlinear}. Yet, under the presence of the constraint \eqref{eq:flatGeneral_constr}, the control becomes difficult, since in the new coordinate $(\bz,v)$, one has a state-input joint constraint:
\be 
(\bz,v)\in\mc V\triangleq\left\{(\bz,v):|\phi_u(\bz,v)|\leq \bar u \right\}.
\label{eq:convolutedConstr}
\ee 

\textit{Objective:} With this setting, our goal is to synthesize a controller to stabilize the trivial system \eqref{eq:linearizedFlat} subject to the convoluted constraint 
\eqref{eq:convolutedConstr} 
towards an equilibrium point $\bz_e = \boldsymbol{0}, v_e  =0$
with a stability guarantee. We also assume that this equilibrium point corresponds to a equilibrium point in the original coordinates, noted as $\bx_e=\boldsymbol{0}, u_e=0.$

For this aim, in our prior work \citep{DoMILP2024}, we showed that the constraint \eqref{eq:convolutedConstr} can be enforced using ReLU-ANN. \thinh{Before proceeding, we assume that the mapping $\phi_u(\bz,v)$ is continuous in a compact domain of interest, ensuring the applicability of ReLU-ANNs as universal function approximators.}

\subsection{ANN-based constraint description}

\begin{figure}[htbp]
    \centering
    \includegraphics[width=0.35\linewidth]{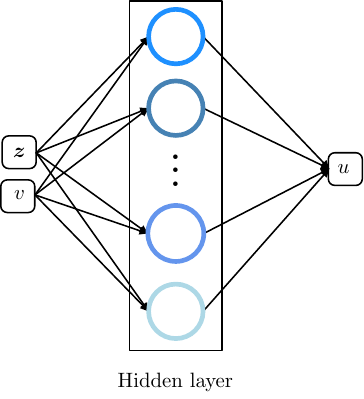}
    \caption{The approximation $\tilde\phi_u(\bz,v)$ of $\phi_u(\bz,v)$ with the structure in \eqref{eq:singleLayerANN}.}
    \label{fig:OneHiddenIllus}
\end{figure}
In details,
let us first consider a structure of a single hidden layer ReLU-ANN mapping from $\bpf^{[0]}\in\R^{n_0}$ to $\bpf^{[2]}\in\R$:
\be 
\bpf^{[1]} = \sigma\left(\bW^{[1]}\bpf^{[0]}+\bb^{[1]} \right), \bpf^{[2]} =\bW^{[2]}\bpf^{[1]}+\bb^{[2]}.\label{eq:singleLayerANN}
\ee 

Then, the constraint \eqref{eq:convolutedConstr} can be imposed with the following proposition from \citep{DoMILP2024}.

\begin{prop}[\thinh{ANN-based constraint inner approximation} \citep{DoMILP2024}]
\label{prop:ANNconstraint}
    Consider a ReLU-ANN approximation of $\phi_u(\bz,v)$ in \eqref{eq:linearizingMaps}, denoted as $\tilde\phi_u(\bz,v)$ structured as in \eqref{eq:singleLayerANN}. Let the approximation error be bounded as:
    \be 
\sup_{(\bz,v)\in\mc Z_w}\left|\phi_u(\bz,v)- \tilde\phi_u(\bz,v)\right|\leq \epsilon,\label{eq:boundedApproximationError}
    \ee 
for some workspace of interest $\mc Z_w$. Then, the constraint \eqref{eq:convolutedConstr} is implied by:
 \be 
(\bz,v)\in\tilde{\mc V}\triangleq\{(\bz,v): |\tilde\phi_u(\bz,v)| \leq \bar u -\epsilon\}. \label{eq:tightenedConstr}
 \ee
Furthermore, the constraint \eqref{eq:tightenedConstr} can be represented by mixed-integer (MI) linear constraints. Briefly, by assigning a binary variable to each neurons, the conditional activation of ReLU function can be modeled as MI linear inequalities.
\end{prop}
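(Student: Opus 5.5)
The proof has two parts: the implication of \eqref{eq:convolutedConstr} by \eqref{eq:tightenedConstr}, which follows from the triangle inequality, and the mixed-integer representation, which follows from a big-M encoding of each ReLU neuron.

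\textbf{Implication.} Take any $(\bz,v)\in\tilde{\mc V}$, and assume it also lies in the workspace $\mc Z_w$ where the error bound \eqref{eq:boundedApproximationError} holds. Writing $\phi_u=\tilde\phi_u+(\phi_u-\tilde\phi_u)$ and using the triangle inequality gives
\[
|\phi_u(\bz,v)|\leq|\tilde\phi_u(\bz,v)|+|\phi_u(\bz,v)-\tilde\phi_u(\bz,v)|\leq(\bar u-\epsilon)+\epsilon=\bar u .
\]
Hence $(\bz,v)\in\mc V$, that is, $\tilde{\mc V}\cap\mc Z_w\subseteq\mc V$. The statement leaves implicit that $\tilde{\mc V}$ is understood inside $\mc Z_w$; I will make this restriction explicit, since the error bound gives no information outside $\mc Z_w$.

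\textbf{Mixed-integer representation.} Set $\bpf^{[0]}=[\bz^\top\ v]^\top$. I will also assume $\mc Z_w$ is a bounded box or polytope, so that interval arithmetic or a linear program gives finite bounds $L_i\leq \bW^{[1]}_{[i,:]}\bpf^{[0]}+\bb^{[1]}_i\leq U_i$ for every neuron $i$. For each neuron, introduce the pre-activation $h_i=\bW^{[1]}_{[i,:]}\bpf^{[0]}+\bb^{[1]}_i$ and a binary variable $\delta_i\in\{0,1\}$, and impose the standard big-M constraints
\[
p^{[1]}_i\geq 0,\quad p^{[1]}_i\geq h_i,\quad p^{[1]}_i\leq h_i-L_i(1-\delta_i),\quad p^{[1]}_i\leq U_i\delta_i .
\]
To check that these encode $p^{[1]}_i=\max(h_i,0)$, consider the two values of $\delta_i$:
\begin{itemize}
\item If $\delta_i=0$, then $p^{[1]}_i\leq 0$ and $p^{[1]}_i\geq0$, so $p^{[1]}_i=0$. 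The constraint $p^{[1]}_i\geq h_i$ then forces $h_i\leq0$, and $h_i\geq L_i$ holds by construction of $L_i$.
\item If $\delta_i=1$, then $h_i\leq p^{[1]}_i\leq h_i$, so $p^{[1]}_i=h_i$. The constraint $p^{[1]}_i\geq 0$ forces $h_i\geq 0$, and $p^{[1]}_i=h_i\leq U_i$ holds by construction of $U_i$.
\end{itemize}
Conversely, any $\bpf^{[0]}\in\mc Z_w$ together with the true activation pattern satisfies these constraints. The projection of the feasible set onto $\bpf^{[0]}$ is therefore exactly the graph of the hidden layer over $\mc Z_w$. The output $\bpf^{[2]}=\tilde\phi_u$ is affine in $\bpf^{[1]}$, so $|\tilde\phi_u|\leq\bar u-\epsilon$ becomes the two linear inequalities $\pm(\bW^{[2]}\bpf^{[1]}+\bb^{[2]})\leq\bar u-\epsilon$. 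Together these give an MI linear description of $\tilde{\mc V}\cap\mc Z_w$.

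\textbf{Main obstacle.} The delicate step is ensuring that the big-M constants are finite and valid. This is what requires $\mc Z_w$ to be compact and polyhedral, and I will state that assumption explicitly. Everything else is routine bookkeeping.
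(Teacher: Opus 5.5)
Your proof is correct and follows the paper's route. The implication is exactly the paper's triangle-inequality argument, and the big-M encoding you write out is the standard construction that the paper simply defers to its earlier reference, so your version is a more self-contained rendering of the same proof. Your explicit restriction to $\tilde{\mc V}\cap\mc Z_w$ is a legitimate sharpening that the paper leaves implicit, since its proof applies the error bound without checking $(\bz,v)\in\mc Z_w$. However, you should add the polyhedral constraint $\bpf^{[0]}\in\mc Z_w$ to the mixed-integer system explicitly: the big-M inequalities alone only confine each pre-activation to $[L_i,U_i]$, which describes a possibly larger set than $\mc Z_w$.
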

\begin{proof}
    \thinh{The inference from \eqref{eq:tightenedConstr} to \eqref{eq:convolutedConstr} can be shown by the triangle inequality. Indeed, one has
    \be 
    \begin{aligned}
        |u |=  |\phi_u(\bz,v)| &= |\phi_u(\bz,v) - \tilde\phi_u(\bz,v) + \tilde\phi_u(\bz,v)| 
         \\
          & \leq |\phi_u(\bz,v) - \tilde\phi_u(\bz,v)| + |\tilde\phi_u(\bz,v)| 
         \overset{\eqref{eq:boundedApproximationError}}{\leq} \epsilon + |\tilde\phi_u(\bz,v)|.
    \end{aligned}
    \ee 
    Hence if $(\bz,v)\in\tilde{\mc V}$ or $\epsilon + |\tilde\phi_u(\bz,v)|\leq \bar u $, one has:
 $
     |\phi_u(\bz,v)| \leq \bar u,
    $    completing the proof. For the derivation of the MI representation, we refer to \citep{DoMILP2024} for more details.}
\end{proof}

In addition, with the piecewise-affine (PWA) representation of ReLU-ANN, we also showed that the set $\tilde{\mc V}$ in \eqref{eq:tightenedConstr} can be described as a union of polytopes
\citep{thinhPWAANN}:
\be 
\tilde{\mc V} =\hspace{-0.15cm}\bigcup_{\balpha\in\{\pm 1\}^{n_1}} \{\bzeta :|\boldsymbol\Psi(\boldsymbol\alpha,\Pi)\bzeta+\boldsymbol{\psi}(\boldsymbol{\alpha},\Pi)|\leq\bar u -\epsilon\}, \label{eq:unionOriginal}
\ee 
where we use $\Pi$ to denote the network parameters (all the weights and biases) and $\balpha = [\alpha_1,...,\alpha_{n_1}]^\top$ is an integer vector. The $i$th row of $\boldsymbol\Psi(\boldsymbol\alpha,\Pi),\boldsymbol\psi(\boldsymbol\alpha,\Pi)$ are computed as:
\begin{subnumcases}
    {\label{eq:cellenum}}
   \boldsymbol\Psi_{[j,:]}(\boldsymbol\alpha,\Pi) =\textstyle\sum_{k=1}^{n_1}\frac{1}{2}\boldsymbol{W}^{[2]}_{[j,k]}(\alpha_k+1)\boldsymbol{W}^{[1]}_{[k,:]},&\\
  \boldsymbol\psi_{j}(\boldsymbol\alpha,\Pi) = \textstyle\sum_{k=1}^{n_1}\frac{1}{2}\boldsymbol{W}^{[2]}_{[j,k]}\bb^{[1]}_k(\alpha_k+1) + \bb^{[2]}_j , &
\end{subnumcases}
for $j\in\{1,...,n_2\}$ and $\bzeta=[\bz^\top\ v]^\top$.

For compactness, after computing the set in \eqref{eq:unionOriginal} and eliminating all empty sets, we obtain $\tilde{\mc V} $ as:
\be
\tilde{\mc V}=\bigcup_{i=1}^{N_c} \mc C_i =\bigcup_{i=1}^{N_c}\{\bzeta=\bbms \bz\\ v\ebms :\boldsymbol{\Xi}_i\bzeta \leq \bxi_i\}, \label{eq:ANNconstr_compact}
\ee 
with $\mc C_i$ being the non-empty polytope from the union \eqref{eq:unionOriginal}.



Next, within the set-theoretic framework, we recall the notion of positively invariant (PI) set to characterize the stability of the system.
\begin{definition}[Positive invariance \citep{blanchini2008set}]
\label{def:invariantSet}
Given a feedback $u=\mu(\bx)$, the set $\mc S$ is positively invariant for the system $\dot\bx=f(\bx,\mu(\bx))$ if
$\forall t>0$, $\bx(t)\in \mc S$ where $\bx(t)$ is the solution of the differential equation with the initial state $\bx(0)\in \mc S$.
\end{definition}

With Proposition \ref{prop:ANNconstraint}, the control problem now is how to design a control law respecting the convoluted constraint \eqref{eq:tightenedConstr}. In our prior work \citep{DoMILP2024,thinhPWAANN}, the constraints \eqref{eq:tightenedConstr} and \eqref{eq:ANNconstr_compact} are enforced directly online via MIP, which is computationally demanding. In this work, we intend to shift the computational burden offline. More specifically, by choosing apriori a state feedback control associated with a CLF, an invariant set will be identified offline, certifying the stability property. Subsequently, only the CLF and the invariant set are retained for online control synthesis, rather than the complicated constraint \eqref{eq:tightenedConstr}.

\section{Invariance characterization and control synthesis}
\label{sec:Invariancecharacterization}

As discussed above, having the representation \eqref{eq:linearizedFlat} gives us the access to well-developed tools from linear control theory. In what follows, we show that, with a static gain state feedback, one can certify an invariant set compatible with the constraint \eqref{eq:tightenedConstr}.
Throughout the first part, the problem of invariance characterization is defined and
reformulated into an optimization problem from a set-theoretic viewpoint. This characterization then gives rise to the applications to advanced tools such as CLF-based control and MPC in the subsequent part.

\subsection{Invariant set characterization from the flat output space}
\label{subsec:InvSr}

First, for the system \eqref{eq:linearizedFlat}, consider a state feedback:
\be 
v = \boldsymbol{K} \bz, \label{eq:stateFeedbackFlat}
\ee
which renders the Lyapunov function
\be
V_z(\bz) = \bz^\top \bPf\bz,
\ee 
decreasing along the system's trajectory, i.e.,
\be 
\nabla V_z(\bz)^\top (\bA +\bB\bK)\bz \leq -\kappa V_z(\bz), \label{eq:CLF_derivative}
\ee 
for some scalar $\kappa>0$.
Depending on the control framework, the matrices $\bPf,\bK$ can be found by different means (e.g., linear quadratic regulator tunings or linear matrix inequalities). Then, an invariant set associated with a control law for \eqref{eq:flatGeneral} can be formulated with the following proposition.
\begin{prop}\label{prop:invariantFlat}
    The ellipsoid $\mc E(\bPf,\varepsilon)$ satisfying:
\be \
\mc E(\bPf,\varepsilon) \subseteq \mc Z_K\triangleq\{\bz: (\bz,\bK\bz)\in \tilde{\mc V} \text{ as in \eqref{eq:ANNconstr_compact}}\},\label{eq:inclusionFlat}
\ee 
is PI for the system \eqref{eq:linearizedFlat}. Furthermore, the set:
\be 
\mc X_{K} = \{\bx : V_x(\bx) \leq \varepsilon\}, \label{eq:invariantStateInputspace}
\ee 
is also  under the control: \be u = \mu_K(\bx) \triangleq \phi_u(\phi^{-1}_x(\bx),\bK\phi^{-1}_x(\bx)) , \label{eq:admissbleControlXU}\ee
with the CLF $V_x(\bx) $ defined as:
\be 
V_x(\bx) = \phi^{-1}_x(\bx)^\top \bPf\phi^{-1}_x(\bx), \label{eq:CLF_original}
\ee 
and $\phi^{-1}_x(\bx)$ denoting the inverse mapping of $\bx = \phi_x(\bz)$.
\end{prop}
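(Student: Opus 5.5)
The argument has two parts: invariance in the flat output space, obtained from the Lyapunov decrease \eqref{eq:CLF_derivative}, and then transport of that invariance to the original coordinates through the diffeomorphism $\phi_x$ of \eqref{eq:linearizingMaps}.

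\textbf{Invariance in the flat output space.} Under the feedback \eqref{eq:stateFeedbackFlat}, the closed loop of \eqref{eq:linearizedFlat} is the linear system $\dot\bz=(\bA+\bB\bK)\bz$. Its solution $\bz(t)$ exists for all $t\geq 0$ and is unique. Along it, \eqref{eq:CLF_derivative} gives $\frac{d}{dt}V_z(\bz(t))\leq -\kappa V_z(\bz(t))$. By the comparison lemma (Gr\"onwall),
\[
V_z(\bz(t))\leq e^{-\kappa t}V_z(\bz(0)).
\]
If $\bz(0)\in\mc E(\bPf,\varepsilon)$, then $V_z(\bz(t))\leq V_z(\bz(0))$ for all $t>0$. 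Since the ellipsoid is a sublevel set of $V_z$ (with $\varepsilon$ read on the squared norm, or else $\varepsilon^2$; the argument is the same either way), $\bz(t)$ stays in $\mc E(\bPf,\varepsilon)$, which is invariance in the sense of Definition \ref{def:invariantSet}.

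\textbf{Admissibility of the input.} By the inclusion \eqref{eq:inclusionFlat}, every such $\bz(t)$ satisfies $(\bz(t),\bK\bz(t))\in\tilde{\mc V}$. Proposition \ref{prop:ANNconstraint} then gives $(\bz(t),\bK\bz(t))\in\mc V$, that is, $|\phi_u(\bz(t),\bK\bz(t))|\leq\bar u$. This step requires $\mc E(\bPf,\varepsilon)\times\{\text{the corresponding }v\}$ to lie inside the workspace $\mc Z_w$ on which the error bound \eqref{eq:boundedApproximationError} holds. I would state this as an implicit standing assumption, or impose it by taking $\tilde{\mc V}\subseteq\mc Z_w$.

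\textbf{Transfer to the original coordinates.} Define $\bz=\phi_x^{-1}(\bx)$, which is well defined because $\phi_x$ is a diffeomorphism. By construction of the flat representation \eqref{eq:flatrep}, applying $u=\phi_u(\bz,v)$ to \eqref{eq:flatGeneral_dyna} makes $\bz$ evolve as \eqref{eq:linearizedFlat} with input $v$. So with $v=\bK\bz$, i.e.\ $u=\mu_K(\bx)$ as in \eqref{eq:admissbleControlXU}, the trajectories are related by $\bx(t)=\phi_x(\bz(t))$, where $\bz(t)$ solves the linear closed loop from $\bz(0)=\phi_x^{-1}(\bx(0))$. Uniqueness of solutions, from smoothness of the closed-loop vector field, ensures that every solution of the original closed loop is of this form. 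Now $V_x(\bx)=V_z(\phi_x^{-1}(\bx))$ by \eqref{eq:CLF_original}, so
\[
\bx\in\mc X_K \iff \phi_x^{-1}(\bx)\in\mc E(\bPf,\varepsilon).
\]
Invariance of $\mc X_K$ then follows from the first part, and the input constraint $|\mu_K(\bx(t))|\leq\bar u$ holds from the second. Differentiating $V_x$ along trajectories with the chain rule recovers $\dot V_x\leq-\kappa V_x$, which shows $V_x$ is a CLF.

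\textbf{Main obstacle.} The delicate step is the equivalence of closed-loop trajectories under the coordinate change. I would verify it by differentiating $\bz=\phi_x^{-1}(\bx)$ along $\dot\bx=f(\bx)+g(\bx)\mu_K(\bx)$ and using the defining identity of the linearizing maps, namely that $D\phi_x(\bz)(\bA\bz+\bB v)=f(\phi_x(\bz))+g(\phi_x(\bz))\phi_u(\bz,v)$. I would also restrict everything to the domain where $\phi_x$ is a diffeomorphism and the approximation bound holds.
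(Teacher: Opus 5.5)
Your proposal is correct and follows the paper's proof essentially step for step: admissibility of $v=\bK\bz$ comes from the inclusion \eqref{eq:inclusionFlat} together with Proposition~\ref{prop:ANNconstraint}; invariance of $\mc E(\bPf,\varepsilon)$ comes from its being a sublevel set of the decreasing $V_z$; and the result transfers to $\mc X_K$ through the diffeomorphism $\phi_x$, using $V_x(\bx(t))=V_z(\bz(t))$. Your additional care fills in points the paper leaves implicit: the requirement that the ellipsoid lie in the workspace $\mc Z_w$, the consistent $\varepsilon$ versus $\varepsilon^2$ reading of the ellipsoid, and the explicit trajectory correspondence via the linearizing identity.
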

\begin{proof}
    The proof is twofold.
    \thinh{First, the control law \eqref{eq:stateFeedbackFlat} is admissible inside $\mc E(\bPf,\varepsilon)$ due to the inclusion \eqref{eq:inclusionFlat}. More specifically, 
    if $\bz \in \mc E(\bPf,\varepsilon)$, one has $(\bz,v)\in\tilde{ \mc V}$ as $v=\bK \bz$ from \eqref{eq:stateFeedbackFlat}. Then, by Proposition \ref{prop:ANNconstraint}, it follows that $ |u |=| \phi_u(\bz,v)| \leq \bar u.$ Hence the control law is constraint admissible.
}
Second, the ellipsoid $\mc E(\bPf,\varepsilon)$ is a level set of the convex function $V_z(\bz)$ which is rendered decreasing by the choice of $\bK$ satisfying \eqref{eq:CLF_derivative}. 
\thinh{Namely, inside $\mc E(\bPf,\varepsilon)$, the value $V(\bz(t))$ decreases along the system's trajectory.
With the invertibility of $\phi_x(\bx)$ stemming from the theory of differential flatness, 
one has $ 
V(\bz(t)) = V_x(\bx(t)),
$ 
and $ \bz \in \mc E(\bPf,\varepsilon)$ if and only if $\bx \in\mc X_K$
 in \eqref{eq:invariantStateInputspace}. Therefore, the value of $V_x(\bx(t))$ is also decreasing along the trajectory $\bx(t) = \phi_x (\bz(t))$ controlled by \eqref{eq:stateFeedbackFlat}.
} In other words, the set $\mc X_K$
is also PI associated with the CLF in \eqref{eq:CLF_original}.
\end{proof}

\begin{rem}
It can be understood that $\mc Z_K$ in \eqref{eq:inclusionFlat} is a set of the new state $\bz$ inside which no constraint violation will occurs. With this interpretation, the inclusion condition \eqref{eq:inclusionFlat} implies that $\mc E(\bPf,\varepsilon)$ is an invariant set in which $v=\bK \bz$ is admissible with respect to the constraint \eqref{eq:convolutedConstr} and eventually, \eqref{eq:flatGeneral_constr}.
\end{rem}

\begin{rem}
\label{rem:uniOn}
It should be noted that $\mc Z_K$ in \eqref{eq:inclusionFlat} is also a union of polytopes, because by \eqref{eq:stateFeedbackFlat} and \eqref{eq:ANNconstr_compact}, one has:
\be
\mc Z_K =\bigcup_i^{N_c} \left\{\bz:\boldsymbol{\Xi}_i\bbm \boldsymbol{I} \\ \boldsymbol{K}\ebm \bz\leq \bxi_i\right\}. \label{eq:ZK_union}
\ee 
This representation will prove useful for the subsequent stability certification.
\end{rem}

With Proposition \ref{prop:invariantFlat}, choosing $\bK, \bPf$ satisfying \eqref{eq:stateFeedbackFlat}--\eqref{eq:CLF_derivative} is rather simple. For instance, one standard choice is to parameterize \citep{blanchini2008set}: 
\be \bPf = \boldsymbol{\Upsilon}^{-1}, \bK = -\bB^{\top}\bPf, \label{eq:gainPara}\ee
where the symmetric matrix $\boldsymbol{\Upsilon}$ is a solution of the linear matrix inequalities:
\be 
\boldsymbol{\Upsilon} \bA^\top + \bA\boldsymbol{\Upsilon} - 2\bB\bB^\top \preceq - 2\kappa\boldsymbol{\Upsilon}, \boldsymbol{\Upsilon} \succ 0. \label{eq:LMIGainPara}
\ee 

Meanwhile, the primary challenge persists in the search for the level set $\mc E(\bPf, \varepsilon)$ fulfilling \eqref{eq:inclusionFlat}. Formally, this problem can be translated to solving the optimization problem:
\be 
\begin{cases}
    \bz ^*= \underset{\bz\in\partial\mc Z_K}{\arg\min}\ \bz ^\top\bPf \bz, & \\
    \varepsilon=V(\bz^*).
\end{cases}
\label{eq:opti_maxEll}
\ee 
Geometrically, the rationale behind \eqref{eq:opti_maxEll} is to find the weighted norm distance
from the origin to the boundary $\partial\mc Z_K$ of $\mc Z_K$ (see Fig. \ref{fig:biggestEll_illust} for illustration).

\begin{figure}[htbp]
    \centering
    \includegraphics[width=0.45\linewidth]{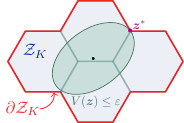}
    \caption{Distance from the origin to the set $\partial\mc Z_K$.}
    \label{fig:biggestEll_illust}
\end{figure}

With the fact that $\mc Z_K$ is a union of $n$-dimensional polytopes as shown in Remark \ref{rem:uniOn}, its boundary is evidently also a union of $(n-1)$-dimensional polytopes. Computing these polytopes, in general for arbitrary dimension still suffers from the curse of dimensionality \citep{aronov1997union,jordan2019provable}. \thinh{While a detailed complexity analysis is beyond the present scope, related discussions on this scalability issue can be found in \citep{karg2020efficient,montufar2014number}.}
For our low-dimensional examples, one brute-force solution is to take advantage of the half-space representation \eqref{eq:ZK_union} and simply collect all the boundary facets. 
For the sake of notation, we denote the boundary in the form:
\be 
\partial\mc Z_K=\bigcup_{j=1}^{N_b}\left \{\bz :\boldsymbol{\Theta}_j\bz \leq \boldsymbol\theta_j\right\}. \label{eq:Boundary_union}
\ee 
Consequently, the problem \eqref{eq:opti_maxEll} can be formulated as:
\be 
\varepsilon=\min_{1\leq j\leq N_b}\underset{\boldsymbol\Theta_j\bz\leq \boldsymbol\theta_j}{\min}\ \bz^\top \bPf \bz, \label{eq:manyQPs}
\ee 
which is simply a search among the solutions of $N_b$ quadratic programs (QPs). Alternatively, the value $\varepsilon$ in \eqref{eq:manyQPs} can also be computed by solving a mixed-integer QP \citep{prodan2015mixed}:
\be 
\begin{aligned}
  &\quad\quad\quad\quad\quad\quad\quad\quad\varepsilon=  {\min}\ \bz^\top \bPf \bz  \\
  & \begin{cases}
      \boldsymbol{\Theta}_j\bz \leq 
      \boldsymbol{\theta}_j + M\delta_j,  
      \delta_j \in \{1,0\}, j\in\{1,...,N_b\},& \\ 
      \sum_{j=1}^{N_b}= N_b-1,& 
  \end{cases}
\end{aligned} \label{eq:InvarianceMIQP}
\ee 
where the scalar $M>0$ is the ``big-M" chosen sufficiently large to model the union \eqref{eq:Boundary_union}.

In summary, with a ReLU-ANN satisfying Proposition \ref{prop:ANNconstraint} with the PWA description \eqref{eq:ANNconstr_compact}, one can detect an invariant set $\mc E(\bPf,\varepsilon)$ by following Procedure \ref{proc:PICharac}.





\begin{procedure}
\label{proc:PICharac}
  \caption{Compute a PI from the flat output space()}
 Choose a feedback gain $\bK$ associated with a matrix $\bPf$ satisfying \eqref{eq:stateFeedbackFlat}--\eqref{eq:CLF_derivative}\; 
     Compute the set $\mc Z_K$ as in \eqref{eq:ZK_union} and collect its boundary $\partial\mc Z_K$ in the form of \eqref{eq:Boundary_union}\;
     Find the invariant level set $\mc E(\bPf,\varepsilon)$ as in \eqref{eq:inclusionFlat} via \eqref{eq:InvarianceMIQP}\;

\end{procedure}


Next, to emphasize the applicability of the setting, we will show the CLF and its associated invariant set can be employed in a control design process. Particularly for the MPC design, we present a procedure to choose the feedback gain and the quadratic CLF.

\subsection{Applications to control design}
\label{subsec:ControlDesign}

It has been shown that with the characterized PI set, the controller $\bK\bz$ in \eqref{eq:stateFeedbackFlat} or $\mu_K(\bx)$ in \eqref{eq:admissbleControlXU} is an constraint-admissible controller certifying Lyapunov stability inside the set. Having this interpretation, several control design framework can be applied to enhance the performance with respect the such nominal laws. In what follows, we enumerate some control techniques which can take advantage of the stability certificate to regulate the nonlinear system.

\subsubsection{CLF-based control}
\phantom{abc}

Given the CLF $V_x(\bx)$ in 
\eqref{eq:CLF_original}, various established methodologies become directly applicable. For instance, Sontag's ``universal" explicit formula \citep{sontag1989universal} or the min-norm control \citep{freeman1996inverse}. In this work, we present the minimally invasive control as a representative case:

\begin{subequations}
{\label{eq:CLFgeneral}}
\begin{align}
       &  u = \underset{|u|\leq\bar u}{\arg \min} \|u - u_d(\bx)\|_2^2,  \\
        &\text{s.t. }\nabla V_x(\bx)^\top (f(\bx) + g(\bx)u)\leq -\kappa V_x(\bx) , \label{eq:CLFgeneral_clf}
\end{align}
\end{subequations}
where $u_d(\bx)$ is some desired controller for $u$ to track. Meanwhile, by the chain rule, the gradient $\nabla V_x(\bx)$ can be computed as:
\be 
\nabla V_x(\bx) = \mc D \phi_x(\bx)^\top(\bPf+\bPf^\top)\phi_x(\bx),
\ee 
with $\mc D \phi_x(\bx)$ denoting the Jacobian of the vector value function $\phi_x(\bx)$.
The implicit control \eqref{eq:CLFgeneral} presents a QP when given a state feedback $\bx$ with stability guarantee from the condition \eqref{eq:CLFgeneral_clf}.
It is important to note that, 
the simplicity of \eqref{eq:CLFgeneral} is enabled by the identification of function $V_x(\bx)$ with an admissible control $u=\mu_K(\bx)$ in \eqref{eq:admissbleControlXU}, which is generally non-trivial to determine for nonlinear system \eqref{eq:flatGeneral}.

A shortcoming of the control \eqref{eq:CLFgeneral} is that the tracking is only guaranteed inside $\mc X_K$ as in \eqref{eq:invariantStateInputspace}. One typical remedy for this issue is to employ predictive control where, from an initial state $\bx(0)$ possible outside $\mc X_K$, we can still guarantee the convergence if the predicted trajectory enters the set after a certain steps. This condition is commonly referred to as the terminal constraint with a proper choice of tunings.
Thus, in the next part, we show how the invariant set can be integrated with the standard axioms from MPC design.

\subsubsection{Flatness-based MPC with stability guarantee}
\phantom{abc}

 In details, the synthesis of the MPC 
 can be given in the following proposition.
\begin{prop}
    \label{prop:MPCtunings}
Given the dynamics \eqref{eq:linearizedFlat} subject to the constraint $(\bz,v)\in\mc V$  as in \eqref{eq:convolutedConstr} with a tightened set $\tilde{\mc V}$ as in \eqref{eq:ANNconstr_compact}, and some user-defined weightings $\bQf,\bRf\succ 0$, consider the following procedure:
\begin{itemize}
    \item Choose a gain $\bK^*$, rendering $(\bA+\bB\bK^*)$ Hurwitz;
    \item Choose $\boldsymbol{M}$ such that:
    \be
    \boldsymbol{M} \succ \bQf + \bK^{*\top}\bRf\bK^*; \label{eq:MatrixM_MPC}
    \ee
    \item Find $\bPf^*$ from the Lyapunov equation:
    \be 
(\bA+\bB\bK^*)^{\top}\bPf^* + \bPf^* (\bA+\bB\bK^*)+\boldsymbol{M} = \boldsymbol 0; \label{eq:GainP_MPC}
    \ee 
    \item Define the set \be \mc Z_p\triangleq\{\bx : \|\bz\|_{\bPf^*}^2\leq \varepsilon^*\} \label{eq:TerminalSet}\ee  with $\varepsilon^*$ is the solution of \eqref{eq:InvarianceMIQP} with $\bK=\bK^*$ and $\bPf=\bPf^*$ .
\end{itemize}
Then we have,
\begin{itemize}
    \item[a)] the control $v=\bK^*\bz$ makes the set $\mc Z_p$ become PI and the constraint $(\bz,\bK^*\bz) \in\mc V$ in \eqref{eq:convolutedConstr} is satisfied inside $\mc Z_p$; 
    \item[b)] $ \dfrac{\mathrm{d}F(\bz,\bK^*\bz)}{\mathrm{d}t}+\ell(\bz,\bK^*\bz )\leq 0,\ \forall\bz \in \mc Z_p,$ with
    \be
F(\bz)=\|\bz\|^2_{\bPf^*}, \ \ell(\bz,v) = \|\bz\|^2_{\bQf}+\|v\|^2_{\bRf}.
    \ee 
\end{itemize}

\end{prop}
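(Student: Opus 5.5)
Part a) follows from Proposition \ref{prop:invariantFlat} once $\bK^*,\bPf^*$ are shown to satisfy \eqref{eq:CLF_derivative}. Part b) is a direct Lyapunov computation using \eqref{eq:GainP_MPC} and \eqref{eq:MatrixM_MPC}.

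\textbf{Step 1 (properties of $\bPf^*$).} Since $\bA+\bB\bK^*$ is Hurwitz and $\boldsymbol{M}\succ 0$ (it dominates $\bQf+\bK^{*\top}\bRf\bK^*\succ 0$), the Lyapunov equation \eqref{eq:GainP_MPC} has a unique solution
\[
\bPf^*=\int_0^\infty e^{(\bA+\bB\bK^*)^\top s}\boldsymbol{M}e^{(\bA+\bB\bK^*)s}\,\mathrm{d}s\succ 0 .
\]
Along $\dot\bz=(\bA+\bB\bK^*)\bz$ we then get $\nabla V_z(\bz)^\top(\bA+\bB\bK^*)\bz=-\bz^\top\boldsymbol{M}\bz$. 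Rayleigh-quotient bounds give $-\bz^\top\boldsymbol{M}\bz\le -\kappa\,\bz^\top\bPf^*\bz$ with $\kappa=\lambda_{\min}(\boldsymbol{M})/\lambda_{\max}(\bPf^*)>0$. Hence \eqref{eq:CLF_derivative} holds for the pair $(\bK^*,\bPf^*)$.

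\textbf{Step 2 (part a).} By construction, $\varepsilon^*$ from \eqref{eq:InvarianceMIQP} is the $\bPf^*$-weighted distance from the origin to $\partial\mc Z_K$. I would argue that $\mc E(\bPf^*,\varepsilon^*)\subseteq\mc Z_K$ as follows. The ellipsoid is connected and contains the origin. I will assume $\bz=0$ lies in $\mc Z_K$, which holds since $(0,0)$ is the equilibrium and must satisfy the tightened constraint, an implicit standing assumption. No point of the ellipsoid's interior lies on $\partial\mc Z_K$. So if the ellipsoid left $\mc Z_K$, a path from the origin to an outside point would cross $\partial\mc Z_K$ at a point with $\|\bz\|^2_{\bPf^*}<\varepsilon^*$, contradicting minimality. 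Closedness of the finite union of polytopes handles the boundary case. This gives \eqref{eq:inclusionFlat}, and Proposition \ref{prop:invariantFlat} then yields positive invariance of $\mc Z_p$. Moreover $(\bz,\bK^*\bz)\in\tilde{\mc V}$ there, so Proposition \ref{prop:ANNconstraint} gives $(\bz,\bK^*\bz)\in\mc V$.

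\textbf{Step 3 (part b).} Compute
\[
\frac{\mathrm{d}F}{\mathrm{d}t}+\ell=-\bz^\top\boldsymbol{M}\bz+\bz^\top(\bQf+\bK^{*\top}\bRf\bK^*)\bz .
\]
This is nonpositive by \eqref{eq:MatrixM_MPC}, in fact negative for $\bz\neq0$. The bound holds globally, and in particular on $\mc Z_p$.

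\textbf{Main obstacle.} The only delicate point is Step 2, namely justifying that the minimal distance to the boundary really certifies inclusion. This requires $0\in\operatorname{int}\mc Z_K$, or at least $0\in\mc Z_K$, plus a connectedness argument. I would state this containment explicitly as a consequence of the equilibrium assumption.
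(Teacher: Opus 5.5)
Your proposal is correct and follows the same route as the paper. Part a) comes from the optimal level $\varepsilon^*$ of \eqref{eq:InvarianceMIQP} together with Proposition \ref{prop:invariantFlat}, and part b) from substituting the Lyapunov equation \eqref{eq:GainP_MPC} and using \eqref{eq:MatrixM_MPC}. The paper gives only a two-line sketch, and your additions make explicit what it leaves unstated: Step 1 checks that $(\bK^*,\bPf^*)$ satisfies \eqref{eq:CLF_derivative} with $\kappa=\lambda_{\min}(\boldsymbol{M})/\lambda_{\max}(\bPf^*)$, which is what licenses invoking Proposition \ref{prop:invariantFlat}, and Step 2 shows, under the implicit assumption $\boldsymbol{0}\in\mc Z_K$, that the minimal weighted distance to $\partial\mc Z_K$ really certifies inclusion.
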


\begin{proof}
    The first property follows directly from the solution of \eqref{eq:InvarianceMIQP} and Proposition \ref{prop:invariantFlat}. Meanwhile, the second property can be derived directly from \eqref{eq:GainP_MPC} with the choice of matrix $ \boldsymbol{M}$ satisfying \eqref{eq:MatrixM_MPC}.
\end{proof}
\begin{coll}
Consider the receding horizon control at time step $t$:
    \begin{subequations}
    {\label{eq:MPC_flatSpace}}
        \begin{align}
            & { \min} \int_0^{T_p} \|\bz(\tau)\|_\bQf+\|v(\tau)\|_\bRf \mathrm d\tau+\|\bz(T_p)\|_{\bPf^*}  \\
            \text{s.t. } &\dot\bz=\bA\bz+\bB v, \bz(0)=\bz(t),\\
           &(\bz(\tau),v(\tau))\in\mc V \text{ as in \eqref{eq:convolutedConstr}}, \tau\in [0,\ T_p],
           \label{eq:convolutedConstr_predictive}\\ 
           & \bz(T_p) \in \mc Z_{p}\text{ as in \eqref{eq:TerminalSet}},
        \end{align}
    \end{subequations}
    with the prediction horizon $T_p$, and the first control action found $v^*(0)$ will be applied to the system by $u=\phi_u(\bz(t),v^*(0))$. Then, with the synthesis from Proposition \ref{prop:MPCtunings}, the policy \eqref{eq:MPC_flatSpace} is recursively feasible and the closed-loop is asymptotically stable.
\end{coll}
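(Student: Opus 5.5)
The argument follows the standard continuous-time MPC stability scheme (Chen--Allgöwer / Mayne et al.). Proposition~\ref{prop:MPCtunings} supplies the terminal ingredients: an invariant terminal set with an admissible local controller, and a terminal cost that is a local CLF dominating the stage cost. The plan has two parts: recursive feasibility via a shifted candidate, and asymptotic stability via the optimal cost as a Lyapunov function. Throughout we read the cost with squared weighted norms, as in $\ell$ and $F$ of Proposition~\ref{prop:MPCtunings}; the unsquared norms written in \eqref{eq:MPC_flatSpace} are taken as a typographical simplification.

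\textbf{Recursive feasibility.} Suppose \eqref{eq:MPC_flatSpace} is feasible at time $t$ with optimal pair $(\bz^*(\cdot),v^*(\cdot))$ on $[0,T_p]$. For a small $\delta>0$, build the candidate at time $t+\delta$ as follows:
\be
\hat v(\tau)=\begin{cases} v^*(\tau+\delta), & \tau\in[0,T_p-\delta],\\ \bK^*\hat\bz(\tau), & \tau\in(T_p-\delta,T_p].\end{cases}
\ee
On the first piece the candidate coincides with the old optimal trajectory, since the model \eqref{eq:linearizedFlat} is exact (nominal case). So $(\hat\bz,\hat v)\in\mc V$ there, and $\hat\bz(T_p-\delta)=\bz^*(T_p)\in\mc Z_p$. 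On the tail, item a) of Proposition~\ref{prop:MPCtunings} applies: $\mc Z_p$ is PI under $v=\bK^*\bz$ and $(\bz,\bK^*\bz)\in\tilde{\mc V}\subseteq\mc V$ inside it. The latter inclusion comes from Proposition~\ref{prop:ANNconstraint} together with Proposition~\ref{prop:invariantFlat}, since $\varepsilon^*$ solves \eqref{eq:InvarianceMIQP}. Hence the state stays in $\mc Z_p$, the constraints hold on the tail, and $\hat\bz(T_p)\in\mc Z_p$. Feasibility at $t$ therefore implies feasibility at $t+\delta$, and induction extends this to all times.

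\textbf{Stability.} Let $J^*(\bz(t))$ be the optimal value. Evaluate the candidate cost $\hat J$ and subtract $J^*(\bz(t))$:
\be
\hat J - J^*(\bz(t)) = -\int_0^{\delta}\ell(\bz^*,v^*)\,\mathrm d\tau + \int_{T_p-\delta}^{T_p}\ell(\hat\bz,\bK^*\hat\bz)\,\mathrm d\tau + F(\hat\bz(T_p)) - F(\bz^*(T_p)).
\ee
By item b) of Proposition~\ref{prop:MPCtunings}, integrating $\dot F+\ell\le 0$ along the tail shows that the last three terms sum to a nonpositive quantity. Thus
\be
J^*(\bz(t+\delta))\le \hat J\le J^*(\bz(t))-\int_0^\delta \ell(\bz^*,v^*)\,\mathrm d\tau .
\ee
Next, $J^*$ is bounded below by $\int\ell$, which is positive definite because $\bQf\succ0$, and bounded above near the origin by $F$, because inside $\mc Z_p$ the local controller is feasible. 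So $J^*$ is a Lyapunov-like function. The decrease inequality, combined with Barbalat's lemma (using boundedness of trajectories and uniform continuity of $\ell$ along them), gives $\bz(t)\to\boldsymbol 0$. Lyapunov stability follows from these two-sided bounds on $J^*$. Finally, $\bx=\phi_x(\bz)$ is a diffeomorphism with $\phi_x(\boldsymbol 0)=\boldsymbol 0$, and $u=\phi_u(\bz,v^*(0))$ satisfies $|u|\le\bar u$ because $(\bz,v^*(0))\in\mc V$. The property therefore transfers to the original system, in the same way as in the proof of Proposition~\ref{prop:invariantFlat}.

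\textbf{Main obstacle.} The delicate step is the technical justification in continuous time: existence of an optimizer (compactness of the admissible input set and closedness of $\mc V$), and regularity of $J^*$ near the origin. I would handle this by assuming existence of optimal solutions, as is standard in the cited MPC framework, and by applying the same tail argument from any $\bz\in\mc Z_p$ to get $J^*(\bz)\le F(\bz)$.
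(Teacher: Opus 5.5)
Your proposal is correct and follows essentially the paper's route: the paper's proof only cites the standard continuous-time terminal-ingredient result of Mayne et al.\ (Sec.~3.6), and that result rests on the same two steps you give, namely a shifted candidate with $v=\bK^*\bz$ on the tail for recursive feasibility and the optimal cost as a Lyapunov function via $\dot F+\ell\le 0$. Reading the cost with squared norms is the intended interpretation; note, however, that $\bz(t+\delta)=\bz^*(\delta)$ requires applying the open-loop optimal $v^*(\cdot)$ over the whole interval $[t,t+\delta]$ (the sampled-data convention of that framework), not only model exactness or the single value $v^*(0)$.
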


\begin{proof}
    See Section 3.6 in \citep{mayne2000constrained}.
\end{proof}

    


It is clear that, for a large initial error, one can increase the prediction horizon $T_p$ until a feasible solution found. However, this evidently comes with a rise in the computational cost. In what follows, we present a particular setting where one can employ the so-called Explicit Reference Governor approach when a significant change in reference is introduced.










\subsubsection{Explicit 
Reference Governor} \label{subsubsec:RG}
\phantom{abc}

For this particular context, we no longer assume the set-point to track as the origin. Rather, suppose that the equilibria of \eqref{eq:flatGeneral}  can be parametrized by a reference signal $r\in\R$ as:
\be 
\bx_e =\bx_e(r), u_e=0.
\ee 
We further assume that $\phi_u(\bz,v)$ in \eqref{eq:linearizingMaps} is independent of $r$ (i.e., $\partial\phi_u(\phi_x^{-1}(\bx_e(r),v)/\partial r=0$). In this manner, the set:
\be 
\bar{\mc X}_{K}(r) = \{\bx: \bar V_x(\bx,r)\leq \varepsilon\}, \label{eq:safeLevelSet}
\ee 
is PI for a fixed reference $r$ under the control:
\be
\bar\mu_{K}(\bx,r) = \phi_u(\phi^{-1}_x(\bx),\bK(\phi^{-1}_x(\bx)-\phi^{-1}_x(\bx_e(r))), \label{eq:PrimaryControlERG}
\ee 
with $\bar V_x(\bx,r)$ adapted from \eqref{eq:CLF_original} as:
\be 
\bar V_x(\bx,r) = \|\phi^{-1}_x(\bx)-\phi^{-1}_x(\bx_e(r))\|_{ \bPf},
\label{eq:CLF_original_ref}
\ee 
and $\varepsilon$ is found from Proposition \ref{prop:invariantFlat} and \eqref{eq:InvarianceMIQP}.

Then, following the framework of Explicit Reference Governor (ERG), given a desired reference $r_d$, one can compute a feasible reference $r_f(t)$ such that $\bar\mu_{K}(\bx,r_f(t))\in\mc U$ as in \eqref{eq:flatGeneral_constr} and $r_f(t) \to r_d$ when $t\to \infty$. More specifically, $r_f(t)$ can be computed by integrating:
\be 
\dot r_f = \rho(r_d,r_f)\Delta(\bx,r_f), \label{eq:ERG_integration}
\ee 
where $\rho(r_d,r_f)$,  $\Delta(\bx,r_f)$ are called the attraction field and the dynamic safety margin, respectively. Their value are given as \citep{nicotra2018explicit}:
\be 
\begin{cases}
    \rho(r_d,r_f)&= (r_d-r_f)/\max(|r_d-r_f|,\eta), \\
    \Delta(\bx,r_f)& = \lambda(\varepsilon-\bar V_x(\bx,r_f) ),
\end{cases} \label{eq:ERGingredients}
\ee 
with $\lambda>0$ is a tuning parameter and $\eta>0$ is a small scalar to avoid singularity. The scalar $ \varepsilon$ characterizes the level set $\bar{\mc X}_K(r)$ as in \eqref{eq:safeLevelSet} inside which the constraint is respected.
Essentially, the online program includes, at each time step, computing $r_f(t)$ by \eqref{eq:ERG_integration} and applying the control $u = \bar \mu_K(\bx,r_f)$ as in \eqref{eq:PrimaryControlERG} to the system.

To this point, it can be seen that with the system's linear representation \eqref{eq:linearizedFlat} in the flat output space and the approximated convoluted constraint \eqref{eq:ANNconstr_compact}, one can compute a CLF $V_z(\bz)$ with the tunings from linear control theory and detect the corresponding invariant set by searching for its level set inscribed in the convoluted constraint.
With such invariance characterization, by properly choosing the control gains or setting up the online control, we have shown that stabilizing policies can be synthesized from different frameworks. In the next part, simulation results will be provided to demonstrate the validity of the proposed setup.


\section{Illustrative examples}
\label{sec:IllusExam}

We validate the proposed setting by two examples.

\subsection{Aircraft longitudinal dynamics stabilization}
\label{subsec:aircr}

For demonstration, let us examine the longitudinal dynamics of a civil aircraft model \citep{nicotra2018explicit}:
\be 
\begin{aligned}
    \dot x_1 & = x_2 \\
    \dot x_2 & = J^{-1}(-d_1L(x_1)+ud_2)\cos x_1
\end{aligned}
\label{eq:longitudinalDyna}
\ee 
with $\bx =[x_1 \ x_2]^\top$ denoting the state vector and $x_1$ being the angle of attack, $L(x_1)$ is the lift generated modeled as $L(x_1) = l_0+l_1x_1+l_3x_1^3$. The input $u$ is the elevator force. The constraint of the system is
$
|u| \leq \bar u.
$
The constant parameters are given as  $l_0=2.5\times 10^5,l_1 = 8.6\times 10^6, l_3 = 4.35\times 10^7$, $J=4.5\times10^6$Nm$^2$, $\bar u = 5\times 10^5$ N, $d_1 = 4 $m and $d_2  =42$m.
The goal is to stabilize the system at $\bx_e = \boldsymbol{0}, u_e = d_1l_0/d_2.$ The mappings as in \eqref{eq:linearizingMaps} for \eqref{eq:longitudinalDyna} are $\phi_x(\bz) = [z_1\ z_2]^\top, $ $\phi_u(\bz,v) = {d_2^{-1}(vJ\cos^{-1}z_1+d_1L(z_1)) }$. As a result, in the flat output space, the linearized dynamics become $\ddot z_1 = v$, or the form of \eqref{eq:linearizedFlat} with $\bA = \bbms 0 &1 \\ 0& 0  \ebms, \bB = [0 \ 1]^\top$. 
\begin{figure}[htbp]
    \centering
    \includegraphics[width=0.8\linewidth]{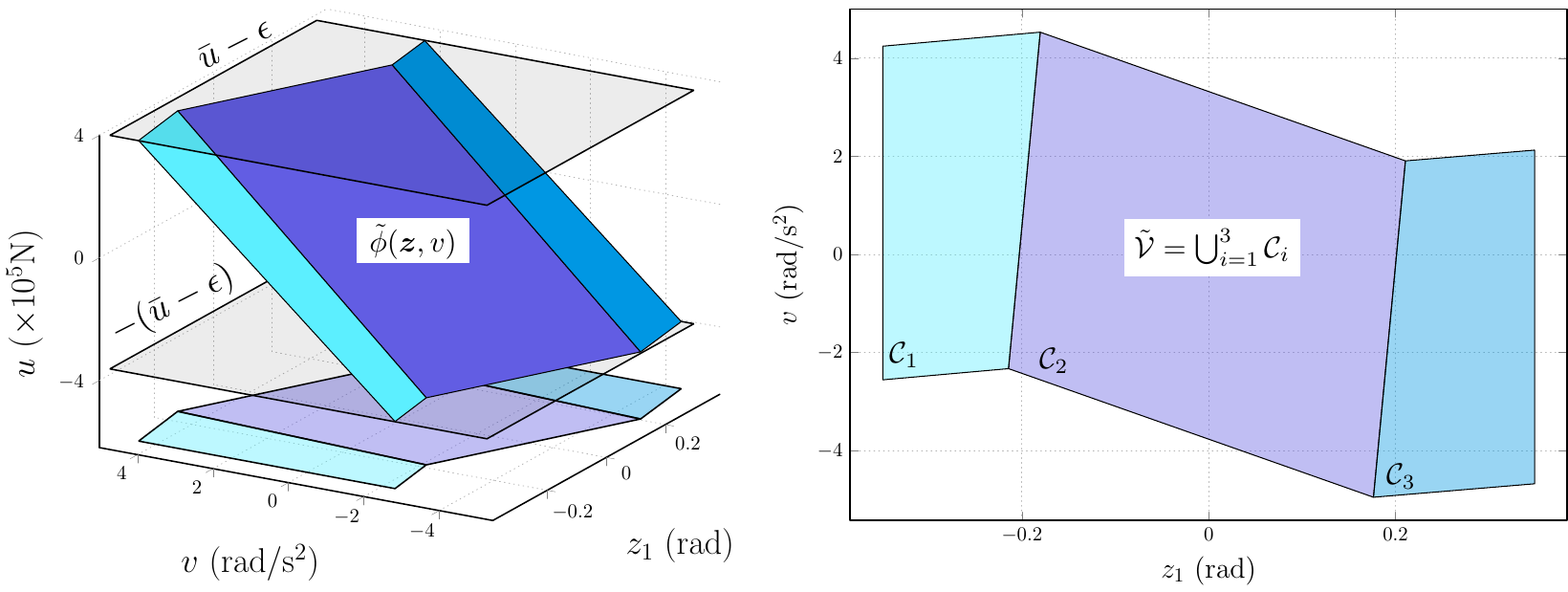}
    \caption{The constrained approximated mapping $\tilde{\phi}_u(\bz,v)$ (left) and the corresponding constraint set $\tilde{\mc V}$ induced from the neural network as in \eqref{eq:cellenum} (right).}
    \label{fig:IllusAircraftCellEnum}
\end{figure}

To approximate $\phi_u(\bz,v)$, we employed an ANN $\tilde\phi_u(\bz,v)$ structured as in \eqref{eq:singleLayerANN} with 5 neurons in the hidden layer. 
The training data is obtained from a uniform grid inside  $\{(\bz,v):|z_1|\leq 0.3491,|v|\leq 5\}.$
The approximation error $\epsilon$ as in \eqref{eq:boundedApproximationError} is estimated as $\epsilon=0.1897$. \thinh{In the paper, such an error bound is verified by uniformly gridding the joint space $(\bz,v)$ and evaluating the maximum deviation between the ANN $\tilde\phi_u(\bz,v)$ and the nonlinear mapping $\phi_u(\bz,v)$. A discussion on the formal guarantee of the error bound with respect to the grid size can be found in Proposition 4 of \citep{thinhPWAANN}}.
With the explicit representation \eqref{eq:unionOriginal}, the constraint set $\tilde{\mc V} $ in \eqref{eq:unionOriginal} is represented as the union of 3 polytopes (see Fig. \ref{fig:IllusAircraftCellEnum}).
Geometrically, the set $\tilde{\mc V}$ is formed by truncating the approximated network $\tilde\phi_u(\bz,v)$ by the absolute input bounds $\bar u-\epsilon$, creating the non-convex feasible domain as depicted on Fig. \ref{fig:IllusAircraftCellEnum}.

\begin{figure}[htbp]
    \centering
    \includegraphics[width=0.745\linewidth]{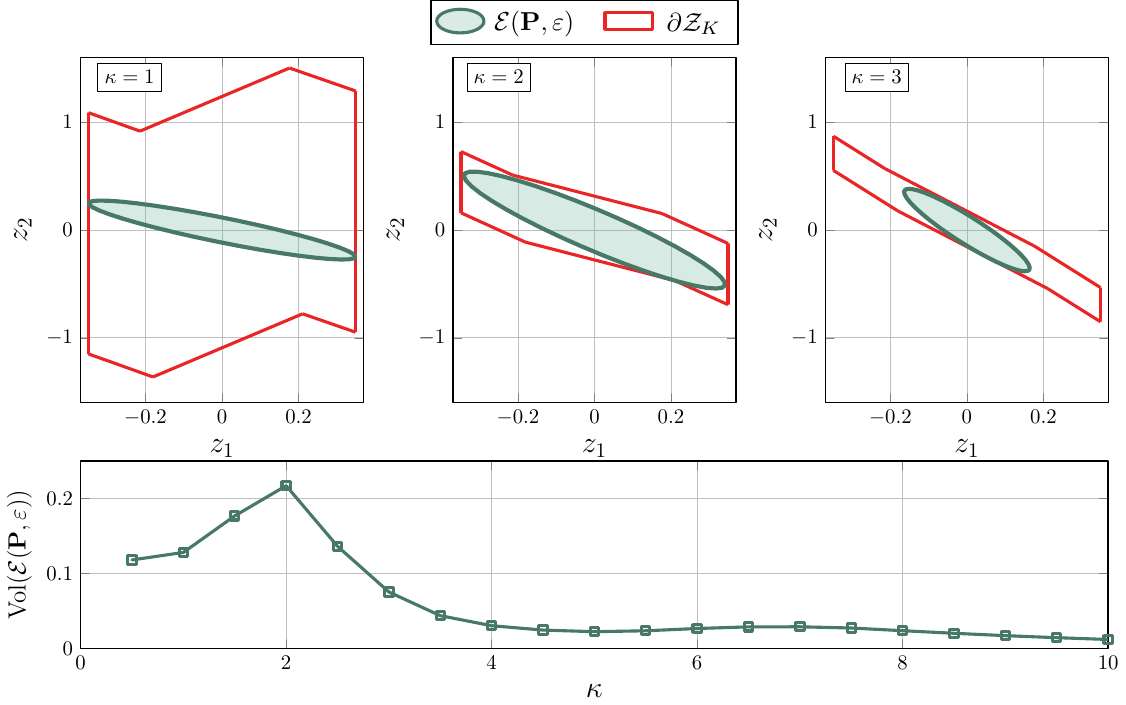}
    \caption{PI sets and their volumes (noted with Vol($\cdot$)) with different convergence rates $\kappa$ as in \eqref{eq:CLF_derivative}.}
    \label{fig:AirCraftMultigains}
\end{figure}

With this constraint set, we follow Procedure \ref{proc:PICharac} to find the ellipsoidal PI where the matrix $\bPf,\bK$ are chosen via \eqref{eq:gainPara}--\eqref{eq:LMIGainPara} with different convergence rate $\kappa$.
The numerical results are depicted in Fig. \ref{fig:AirCraftMultigains}.

\subsection{Horizontal 1D quadrotor}
\label{subsec:quad}

To present the application of the invariance characterization to the control synthesis procedures in Section \ref{subsec:ControlDesign}, we consider the
1-D quadcopter model (Quad1D) \citep{greeff2020exploiting}:
\be  
\begin{aligned}
    \dot x_1 & = x_2, \\
     \dot x_2 &=\Gamma \sin x_3 - \gamma  x_2, \\
      \dot x_3&=\tau^{-1}(u-x_3),
\end{aligned}
\label{eq:quad1D_nonlinear}
\ee 
where $x_1,x_3$ denotes the displacement on the horizontal axis (m) and the pitch angle (rad), respectively (see Fig. \ref{fig:Drone1DIllus}). The control $u$ is the commanded pitch angle subject to the constraint $|u|\leq \bar u= 0.1745$ (rad). The constant parameters are given as $\Gamma=10,\gamma=0.3,\tau=0.2.$ The goal is to stabilize the system at the origin $\bx_e = \boldsymbol{0}, u_e = 0.$

\begin{figure}[h]
    \centering
    \includegraphics[width=0.35\linewidth]{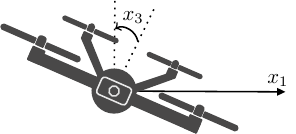}
    \caption{Horizontal quadcopter model.}
    \label{fig:Drone1DIllus}
\end{figure}

For this model, the flatness-based linearizing mapping of the form \eqref{eq:linearizingMaps} is given as:
\begin{subequations}\label{eq:linearizingMapDrone_full}
\begin{align}
\phi_x(\bz)&= \bbm z_1\\ z_2 \\ \thinh{\arcsin(\nu(\bz))}\ebm,\\
    \phi_u(\bz,v)&=\dfrac{\tau(v+\gamma z_3)}{\Gamma\sqrt{1-\nu^2(\bz)} 
}+\thinh{\arcsin(\nu(\bz))}, \label{eq:linearizingMapDrone1D}
\end{align}
\end{subequations} 
with $\nu(\bz)=\Gamma^{-1}(z_3+\gamma z_2).$

With the mapping $   \phi_u(\bz,v)$ in \eqref{eq:linearizingMapDrone1D}, we employ a network of the form \eqref{eq:singleLayerANN} with 5 neurons in the hidden layer. The approximation error is estimated as $\epsilon=0.0026$. This approximation in turns, results in the constraint set $\tilde{\mc V}$ in \eqref{eq:unionOriginal} as a union of 5 polytopes. 

\begin{figure}[htp]
    \centering
    \includegraphics[width=0.98\linewidth]{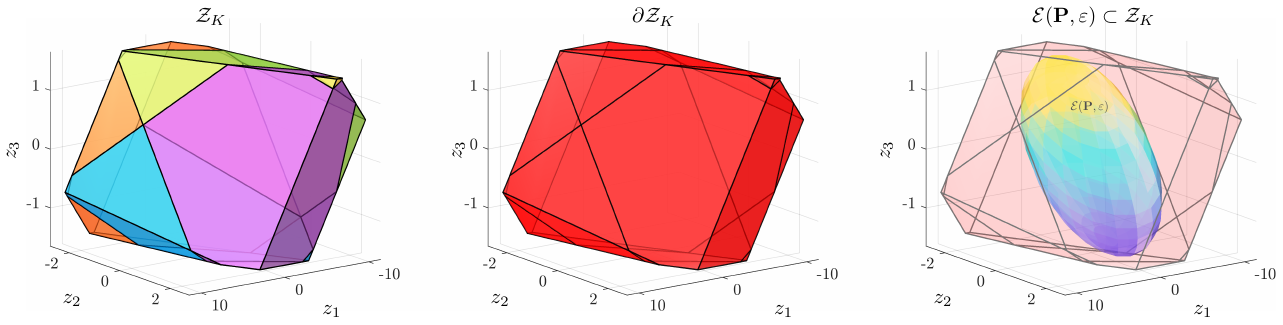}
    \caption{The admissible union of polytopes $ \mc Z_K$ as in \eqref{eq:inclusionFlat} (left), its boundary set (middle) and an ellipsoidal PI set $\mc E(\bPf,\varepsilon)\subset\mc Z_K$ (right).}
    \label{fig:Quad1DBall}
\end{figure}

To demonstrate the effectiveness of the proposed setting, let us consider the following controllers:
\begin{itemize}
    \item \textit{CLF-based control} (CLF): For this setting, we employ the controller \eqref{eq:CLFgeneral} with $u_d(\bx)=\phi_u(\bz,\bK_{lqr}\bz) $, with: $$\bK_{lqr} = -[ 3.2,\    5.5, \    4.0] .$$ 
    The CLF $V_x(\bx)$ is of the form \eqref{eq:CLF_original} with $\bPf$ chosen by solving \eqref{eq:LMIGainPara} with the exponential convergence rate $\kappa = 0.01$. With the nominal control $\mu_K(\bx) =\phi(\bz,\bK\bz) $, $\bK=-\bB^\top\bPf$ and Procedure 
    \ref{proc:PICharac}, the corresponding PI set $\mc E(\bPf,\varepsilon)$ as in \eqref{eq:inclusionFlat} is characterized by:
    \be
    \bPf = \bbm
        0.214   & 0.307   &  0.394 \\
    0.307  &  0.728  &  0.797 \\
    0.394  &  0.797  &  1.290\\
    \ebm, \varepsilon =      0.4329.
    \ee 
    Fig. \ref{fig:Quad1DBall} and \ref{fig:InvarianceTest} depict the procedure for finding the set and verifying its positive invariance.

\item \textit{Explicit Reference governor} (ERG): To show that the PI computed set can also be used with the ERG to improve feasibility, we employ the scheme proposed in Section \ref{subsubsec:RG}. The tuning parameters are chosen as $\lambda=20,\eta=0.2.$ 
    
    \item \textit{Flatness-based MPC} (FMPC): We follow Proposition \ref{prop:MPCtunings} to define the implicit controller as in \eqref{eq:MPC_flatSpace}. The numerical parameters are given as:
    \be
    \begin{aligned}
        \bQf &= \text{diag}(10,10,10),
\bRf=10 ,
\bK^* = [ -1.00,\   -2.41,\   -2.41
]^\top, \\
 \bPf^* &= \bbm
      324.71  & -10.00 & -128.64 \\ 
  -10.00 &104.50  & -34.14 \\
 -128.64 & -34.14&  101.57 \\
    \ebm,
    \varepsilon^* =14.38.
    \end{aligned}
    \ee 
    
\item \textit{Quasi-infinite horizon MPC} (qMPC): To provide a contextual understanding of the computational cost, we employ the qMPC setting proposed in \citep{chen1998quasi} with the same penalty matrices for the state $\bQf$ and the input $\bRf$ as in FMPC. The local controller is derived from the Riccati equation for the Taylor approximation model of the dynamics at the origin.
    
\end{itemize}
\begin{figure}[htbp]
    \centering
    \includegraphics[width=0.90\linewidth]{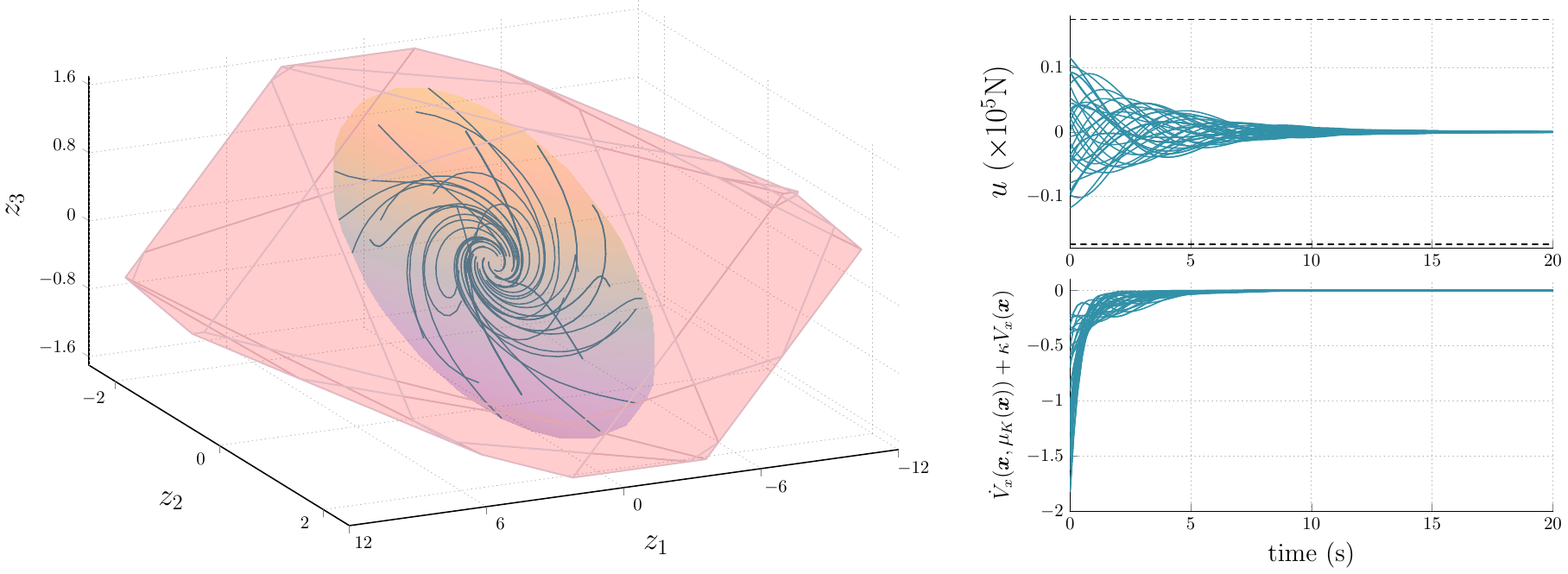}
    \caption{The set $\mc E(\bPf,\varepsilon)$ being PI with the nominal control $\mu_K(\bx)$.}
    \label{fig:InvarianceTest}
\end{figure}

\begin{table}[ht]
  \centering
  \caption{Online computation time (CT) for the controllers with the two scenarios (ms)}
    \begin{tabular}{|c|c|c|c|c|c|}
    \hline
      
    \multicolumn{2}{|c|}{}    &   CLF   &   ERG    & FMPC & qMPC \\
    \hline
    \centering\multirow{2}[0]{*}{ \rotatebox{90}{Case 1}} \rule{0pt}{12pt} &  max CT    &   8.173   &   2.038   &  30.218 &20.359\\
\cline{2-6}   \rule{0pt}{13pt}      &    mean CT  &   4.787   &   0.005   & 8.761 & 7.154\\
    \hline
    \multirow{2}[0]{*}{\centering \rotatebox{90}{Case 2}} &    max CT \rule{0pt}{11pt} &    $\times$  &  2.359    & 54.842&46.770\\
\cline{2-6}   \rule{0pt}{13pt}      &   mean CT   &   $\times$   &   0.005   &   15.236& 12.643\\
    \hline
    \end{tabular}%
  \label{tab:compCost}%
\end{table}%
\begin{figure}[hbt]
    \centering
    \includegraphics[width=0.675\linewidth]{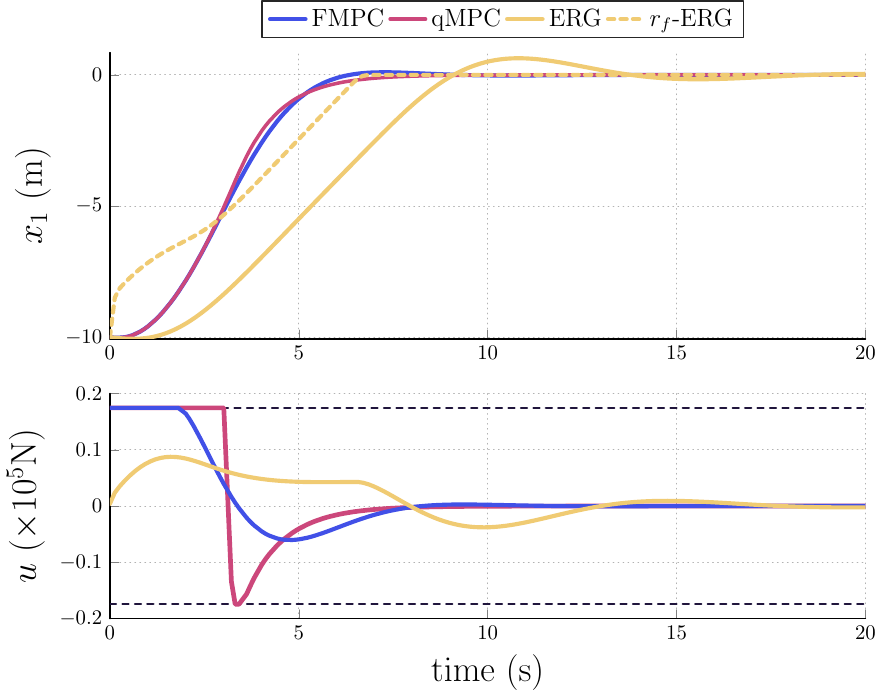}
    \caption{Application of the invariance characterization to different control schemes ($r_f$-RG denotes the filtered reference computed via \eqref{eq:ERG_integration}).}
    \label{fig:CompareControllers}
\end{figure}

To analyze the controllers' applicability, we implement the stabilization from two initial state $\bx(0) = [-2,\ -0.5,\ 0.085]^\top$ (Case 1) and $\bx(0) = [-10,\ -0.15,\ 0]^\top$ (Case 2).
The system's trajectories and the computational cost of the controllers are provided in Fig. \ref{fig:CompareControllers} and Table \ref{tab:compCost}, respectively.

\subsection{Discussion}
\label{subsec:discuss}
In general, via the two examples, ellipsoidal PI sets can be found inside the feasible domain described by the ReLU-ANN in the linearizing or the flat output space (see Fig. \ref{fig:IllusAircraftCellEnum}). The stability of the system inside such domains is correspondingly certified with a CLF of which the finding is simplified thanks to the linear representation \eqref{eq:linearizedFlat} in the new space. 

For the nonlinear dynamics \eqref{eq:longitudinalDyna}, as depicted in Fig. \ref{fig:AirCraftMultigains}, several certified invariant sets can be found with different convergence rates via the parameterization \eqref{eq:LMIGainPara}. This connection to the computation of quadratic CLF of linear systems suggests that one can modify the domain of attraction, suggesting interesting directions such as expanding the stability region towards some desired initial states.

Importantly, inside the computed PI the input constraint will not be violated with the nominal control law $\mu_K(\bx)$ as in \eqref{eq:admissbleControlXU}. More specifically, the characterization indicates that $\mu_K(\bx)$ is a feasible controller which results in the the exponential convergence: $\dot V_x(\bx,\mu_K(\bx))  + \kappa V_x(\bx)\leq 0$ as given in Fig.
\ref{fig:InvarianceTest} without breaking the constraints.

The demonstration on the quadcopter model \eqref{eq:quad1D_nonlinear} also suggests that various control frameworks can take advantage of this characterization to ensure stability with different levels of complexity and performance.
In detail, the computational cost in Table \ref{tab:compCost} for initial point in Case 1 shows that, ERG requires the least computational power since it employs only standard calculation and numerical integration. Next, the complexity of CLF is slightly elevated but still relatively simple. This is because the implemented online optimization \eqref{eq:CLFgeneral} presents a simple QP with three linear inequalities. Finally, although having stability guaranteed, the FMPC \eqref{eq:MPC_flatSpace} presents a nonlinear optimization problem due to the constraint \eqref{eq:convolutedConstr_predictive}, consuming even more computational overhead than the baseline control qMPC. 
Additionally, with the further initial position of the drone in Case 2 (10m from the target), it is evident that this exceeds the certified region of the CLF (see Fig. \ref{fig:InvarianceTest}), rendering this control inapplicable. However, for the predictive controls (FMPC and qMPC), one can simply increase the prediction horizon until a feasible solution is found. Consequently, the complexity of the program becomes higher as indicated by their computation times. Interestingly, thanks to the simple nature and a well-characterized invariance, the required online calculation of ERG time does not vary while the constraint is respected even for such a large initial error.
The trade-off for this simplicity, however, is a slower convergence and input under-exploitation.

Essentially, the practical implementation of control strategies is directly influenced by computational resource at hand, necessitating a compromise between computational complexity and control performance.
Note that the possible application of the characterized PI set does not strictly limit to the frameworks enumerated. There exist other approaches which can be similarly
adapted such as Lyapunov-based-MPC \citep{de2008lyapunov}, predictive control using artificial reference \citep{Krupa10886854} or ANN-based control design with ellipsoidal PI set \citep{markolf2023tailored}. 

\thinh{Regarding the limitations of the proposed framework, although enumerating the members of the union of polytopes in \eqref{eq:cellenum} comprises elementary operations, the construction of the set may become intractable with respect to the number of polytopes created, or the size of the network. }
This problem naturally suggests a future direction of characterizing the invariance directly from the network parameters (weights and biases) instead of pre-processing them into interpretable geometric objects. \thinh{Finally, as the system is linear in the flat output space, we aim to further enhance the design by incorporating established robustification techniques from linear control theory to account for disturbances \citep{fortuna2021optimal}.}
\thinh{Looking ahead, we also aim to explore the use of ANN as a general tool for representing complex nonlinear mappings and enforcing safety constraints across different applications \citep{gao20253d,gao2025three,gao2025encrypt}.}

\section{Conclusion and outlook}
\label{sec:concl}

In this work, we proposed an offline procedure to find a positive invariant (PI) set associated with a Control Lyapunov Function (CLF) via neural network-based approximation for a class of nonlinear differentially flat systems.
More specifically, with the property of feedback linearizability, a single-input flat system can be transformed into a linear controllable dynamics at the price of distorting the feasible set. The distorted feasible set is then characterized by approximating the linearizing mapping with an artificial neural network (ANN) and represented as a union of polytopes. \thinh{With the property of a universal approximator of ANN and the piecewise-affine representation, such characterization provides a tight representation of the constraint admissible region.}
\thinh{Together with the linear dynamics in the new coordinates, such a polytopic encoding then gives rise to the construction of an ellipsoidal PI set and a CLF via established
methods, available in the linear control framework}. As a result, these stability and constraint-satisfying ingredients can be employed in various control design frameworks to obtain effective controllers with different levels of online complexity and performance. Simulation results are provided to demonstrate the applicability of the setting in control synthesis for nonlinear systems. 
\thinh{In a broader sense, the obtained results can be interpreted through the complementary roles of flatness and neural approximation. Flatness provides a linear structure that simplifies the nonlinear dynamics, while the ANN offers a computationally efficient parameterization of the associated nonlinear constraints. Together, these elements enable an effective characterization of the PI with reduced conservatism.}
For future directions, we anticipate characterizing the invariance directly from the network's parameters, bypassing the complex union of polytopes. 
\thinh{A direct comparison with other stability verification methods is also planned as a subsequent step.}
Furthermore, we expect that the methodology can be extended to applications in other frameworks involving ANN-based linearizing variable transformations, such as data-driven Koopman-based lifting linearization. 


\backmatter

\bmhead{Supplementary information}




\bmhead{Acknowledgements}
This work has benefited from a French government grant managed by the {Agence Nationale de la Recherche under the France 2030 program}, reference {ANR-23-IACL-0006}.

\bibliographystyle{sn-mathphys-num}



\end{document}